\numberwithin{equation}{section}
\newtheorem{theorem}{Theorem}[section]
\newtheorem{proposition}[theorem]{Proposition}
\theoremstyle{definition}
\newtheorem{remark}[theorem]{Remark}
\newcommand{\Id}{\mathbbmss{1}}
\newcommand{\rmi}{ \textnormal{i}}
\font\black=cmbx10 \font\sblack=cmbx7 \font\ssblack=cmbx5 \font\blackital=cmmib10  \skewchar\blackital='177
\font\sblackital=cmmib7 \skewchar\sblackital='177 \font\ssblackital=cmmib5 \skewchar\ssblackital='177
\font\sanss=cmss10 \font\ssanss=cmss8 
\font\sssanss=cmss8 scaled 600 \font\blackboard=msbm10 \font\sblackboard=msbm7 \font\ssblackboard=msbm5
\font\caligr=eusm10 \font\scaligr=eusm7 \font\sscaligr=eusm5  \font\fraktur=eufm10
\font\sfraktur=eufm7 \font\ssfraktur=eufm5 
\font\bsymb=cmsy10 scaled\magstep2
\def\all#1{\setbox0=\hbox{\lower1.5pt\hbox{\bsymb
       \char"38}}\setbox1=\hbox{$_{#1}$} \box0\lower2pt\box1\;}
\def\exi#1{\setbox0=\hbox{\lower1.5pt\hbox{\bsymb \char"39}}
       \setbox1=\hbox{$_{#1}$} \box0\lower2pt\box1\;}
\def\tx#1{{\fam0\relax#1}}
\def\hpb#1{\setbox0=\hbox{${#1}$}
    \copy0 \kern-\wd0 \kern.2pt \box0}
\def\vpb#1{\setbox0=\hbox{${#1}$}
    \copy0 \kern-\wd0 \raise.08pt \box0}
\def\pmb#1{\setbox0\hbox{${#1}$} \copy0 \kern-\wd0 \kern.2pt \box0}
\def\pmbb#1{\setbox0\hbox{${#1}$} \copy0 \kern-\wd0
      \kern.2pt \copy0 \kern-\wd0 \kern.2pt \box0}
\def\pmbbb#1{\setbox0\hbox{${#1}$} \copy0 \kern-\wd0
      \kern.2pt \copy0 \kern-\wd0 \kern.2pt
    \copy0 \kern-\wd0 \kern.2pt \box0}
\def\pmxb#1{\setbox0\hbox{${#1}$} \copy0 \kern-\wd0
      \kern.2pt \copy0 \kern-\wd0 \kern.2pt
      \copy0 \kern-\wd0 \kern.2pt \copy0 \kern-\wd0 \kern.2pt \box0}
\def\pmxbb#1{\setbox0\hbox{${#1}$} \copy0 \kern-\wd0 \kern.2pt
      \copy0 \kern-\wd0 \kern.2pt
      \copy0 \kern-\wd0 \kern.2pt \copy0 \kern-\wd0 \kern.2pt
      \copy0 \kern-\wd0 \kern.2pt \box0}
\mathchardef\za="710B  
\mathchardef\zb="710C  
\mathchardef\zg="710D  
\mathchardef\zd="710E  
\mathchardef\zve="710F 
\mathchardef\zz="7110  
\mathchardef\zh="7111  
\mathchardef\zvy="7112 
\mathchardef\zi="7113  
\mathchardef\zk="7114  
\mathchardef\zl="7115  
\mathchardef\zm="7116  
\mathchardef\zn="7117  
\mathchardef\zx="7118  
\mathchardef\zp="7119  
\mathchardef\zr="711A  
\mathchardef\zs="711B  
\mathchardef\zt="711C  
\mathchardef\zu="711D  
\mathchardef\zvf="711E 
\mathchardef\zq="711F  
\mathchardef\zc="7120  
\mathchardef\zw="7121  
\mathchardef\ze="7122  
\mathchardef\zy="7123  
\mathchardef\zf="7124  
\mathchardef\zvr="7125 
\mathchardef\zvs="7126 
\mathchardef\zf="7127  
\mathchardef\zG="7000  
\mathchardef\zD="7001  
\mathchardef\zY="7002  
\mathchardef\zL="7003  
\mathchardef\zX="7004  
\mathchardef\zP="7005  
\mathchardef\zS="7006  
\mathchardef\zU="7007  
\mathchardef\zF="7008  
\mathchardef\zW="700A  
\mathchardef\zC="7009  
\newcommand{\be}{\begin{equation}}
\newcommand{\ee}{\end{equation}}
\newcommand{\bea}{\begin{eqnarray}}
\newcommand{\eea}{\end{eqnarray}}
\newcommand{\Z}{{\mathbb Z}}
\def\*{{\textstyle *}}
\newcommand{\R}{{\mathbb R}}
\newcommand{\s}{{\textstyle *}}
\def\xi{\tx{i}}
\def\s*{{\scriptstyle *}}
\newcommand{\beas}{\begin{eqnarray*}}
\newcommand{\eeas}{\end{eqnarray*}}
\author{Andrew James Bruce}
   \email{andrewjamesbruce@googlemail.com}
\begin{document}
\date{\today}
\title{A Novel Generalisation of Supersymmetry:  Quantum $\Z_2^2$-Oscillators and their `superisation'}

\begin{abstract}
We propose a very simple toy model of a $\Z_2^2$-supersymmetric quantum system and show, via Klein's construction, how to understand the system as being an $N=2$ supersymmetric system with an extra $\Z_2^2$-grading. That is, the commutation/anticommutation rules are defined via the standard boson/fermion rules, but the system still has an underlying $\Z_2^2$-grading that needs to be taken into account.
\end{abstract}

\maketitle

\section{Introduction}
Recently, a novel generalisation of supersymmetry that is inherently $\Z_2^2$-graded  ($\Z_2^2 := \Z_2 \times \Z_2$) has been proposed (see \cite{Bruce:2019}), and some classical and quantum models studied (see \cite{Aiwawa:2020a, Aiwawa:2023,Aiwawa:2020,Aiwawa:2020b,Bruce:2020a,Bruce:2021,Bruce:2020}). The basic degree of freedom in these models are a bosonic, exotic bosonic and two species of fermions. The novel aspect is that the exotic bosons anticommute with the fermions and the two species fermions mutually commute. In this sense these systems have exotic relative statistics, to use a term borrowed from  Green--Volkov parastatistics (see \cite{Druhl:1970} and references therein). Moreover, these systems have a pair of supersymmetry generators whose commutator, rather than anticommutator, is a (possibly vanishing for certain models) central term. At the time of writing, it is not clear if physical systems can exhibit this kind of generalised supersymmetry. Low dimensional systems are certainly a candidate as the spin-statistics theorem need not hold. For sure, more models need to be constructed, studied and their relation with conventional models made clear. There are promising results within multiparticle theories where, in principle, there are observable consequences of the $\Z_2^2$-grading (see \cite{Toppan:2020}). Furthermore, we remark that no experimental evidence today has emerged that nature realises supersymmetry at the level of fundamental particles. Rather speculatively, we consider the construction of simple models as an important step in experimentally realising $\Z_2^n$-supersymmetry in, say, trapped ion quantum simulators (see \cite{Cai:2022} for the case of a supersymmetric quantum mechanical model and \cite{HuertaAlderete:2017} for the case of para-boson oscillators).  We further remark that particles with exotic statistics have been proposed as a candidate for dark matter (see, for example, \cite{Nelson:2016}).  \par
With the above comments in mind, we construct a $\Z_2^2$-graded version of Nicolai's supersymmetric oscillator (see \cite{Nicolai:1976}), examine some of its elementary properties and then use Klein's construction (see \cite{Klien:1938}) to render the commutation/anticommutations to the the standard ones and the system supersymmetric.  The `superised' system is not completely standard as there is still an underlying $\Z_2^2$-grading:  the Hilbert--Fock space is $\Z_2^2$-graded and the supersymmetry generators still carry a $\Z_2^2$-grading.  The extra grading is encoded in two Witten parity operators rather than a single one as found in standard supersymmetry (this was first observed in \cite{Bruce:2020}). It is remarkable that a very simple mathematical system can exhibit this $\Z_2^2$-graded generalisation of supersymmetry.
\medskip 

\noindent \textbf{Conventions:} For notational simplicity, we work in units such the mass of the particles $m=1$ and Planck's constant $\hbar =1$ (as are any possible coupling constants). As an ordered set we define, $\Z_2^2 := \{(0,0), (1,1), (0,1), (1,0) \}$. The $\Z_2^2$-commutator is defined as
 $$[A,B]_{\Z_2^2} :=  AB -(-1)^{\langle \deg(A)| \deg(B)\rangle}\, BA\,,$$
where $\langle -|-\rangle$ is the standard scalar product. For  the conjugation operation we take the convention that $(ab)^\dag = b^\dag a^\dag$ irrespective of the $\Z_2^2$-degree. 

\section{Quantum $\Z_2^2$-Oscillators}
\subsection{The model via creation and annihilation operators}
We construct  a very simple a $\Z_2^2$-supersymmetric system over a single point, thus we have a model of zero-dimensional field theory.  This can be thought of as a spin-lattice model consisting  four independent spin degrees of freedom on a single lattice point. As the system is inherently zero-dimensional we do not have the proper notion of spin. None-the-less, we can consider operators that are $\Z_2^2$-graded and they satisfy canonical $\Z_2^2$-graded commutation  relations. That is, we take a $*$-algebraic approach. In particular,  consider a set of creation and annihilation operators acting a Hilbert space $\mathcal{H}$: \
\medskip 

\begin{tabular}{lll}
$b^\dag, b$ & degree $(0,0)$ & Standard Boson\\
$e^\dag, e$ & degree $(1,1)$ &  Exotic Boson\\
$f_1^\dag, f_1$ & degree $(0,1)$ &  Fermion Type 1\\
$f_2^\dag, f_2$ & degree $(1,0)$ &  Fermion Type 2\\
\end{tabular} 
\medskip

\noindent We will use classical notation $[-,-]$ and $\{ -,- \}$ for the $\Z_2^2$-commutators for clarity.  The commutation rules are the standard CCR and CAR,
\begin{align}
&[b, b^\dag] = 1, & [e, e^\dag] = 1, && \{f_i, f_i^\dag \}= 1,
\end{align}
with all other $\Z_2^2$-commutators vanish. For example, $f_1 f_2 = f_2 f_1$ and $ef_i =  -\,f_i e$.   We define the Hamiltonian as 
\begin{equation}\label{eqn:Ham1}
H_{00} := b^\dag b + e^\dag e + f_1^\dag f_1  + f_2^\dag f_2.
\end{equation}
This means that we are considering no interaction between the individual oscillators. Note that this Hamiltonian is the sum of the number operators $N_b = b^\dag b$, $N_e = e^\dag e$, $N_{f_1} = f^\dag_1 f_1$ and $N_{f_2} = f^\dag_2 f_2$. Thus, this system is the natural generalisation of Nicolai's supersymmetric oscillator (see \cite{Nicolai:1976}). The number operators are degree $(0,0)$ and satisfy the usual relations $[N_a, a^\dag] = a^\dag$ and $[N_a, a] = - a$, where $a \in \{ b, e, f_1, f_2\}$.  This system exhibits $\Z_2^2$-supersymmetry as first defined by Bruce \cite{Bruce:2019} and Bruce \& Duplij \cite{Bruce:2020}.  We naturally take observables to be degree $(0,0)$ self-adjoint operators. 
\begin{theorem}\label{thm:Z22QAlg}
The following self-adjoint $\Z_2^2$-graded operators
\begin{subequations}
\begin{align}
& Q_{01} := f^\dag_1 b + b^\dag f_1 + f^\dag_2 e + e^\dag f_2,\\
& Q_{10} := f^\dag_2 b + b^\dag f_2 + f^\dag_1 e + e^\dag f_1,
\end{align}
\end{subequations}
satisfy the $\Z_2 \times \Z_2$-graded, $\mathcal{N} = (1, 1)$ supertranslation algebra (with vanishing central term), i.e.,
\begin{align}
& \{ Q_{01}, Q_{01} \}=  \{ Q_{10}, Q_{10} \}= 2 H_{00}, &&  [Q_{10},Q_{01}] = 0.
\end{align}
\end{theorem}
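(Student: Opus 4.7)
First I would identify what each of the three $\Z_2^2$-brackets concretely is. Since $\langle(0,1)|(0,1)\rangle = \langle(1,0)|(1,0)\rangle = 1$, the two self-brackets are ordinary anticommutators, so what must be shown is $Q_{01}^2 = Q_{10}^2 = H_{00}$. Since $\langle(0,1)|(1,0)\rangle = 0$, the mixed bracket is an ordinary commutator, and the goal there is $Q_{01}Q_{10} = Q_{10}Q_{01}$.

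For the squares, I would split each charge into two ``Nicolai pairs'',
\[
Q_{01} = \underbrace{(f_1^\dag b + b^\dag f_1)}_{S} + \underbrace{(f_2^\dag e + e^\dag f_2)}_{T}, \qquad Q_{10} = \underbrace{(f_2^\dag b + b^\dag f_2)}_{S'} + \underbrace{(f_1^\dag e + e^\dag f_1)}_{T'}.
\]
For each summand the calculation is the classical Nicolai-oscillator one: using $f_i^2 = (f_i^\dag)^2 = 0$ together with the CCR $[b,b^\dag] = [e,e^\dag]=1$ and the CAR $\{f_i,f_i^\dag\}=1$, one obtains $S^2 = N_b + N_{f_1}$, $T^2 = N_e + N_{f_2}$, $(S')^2 = N_b + N_{f_2}$, and $(T')^2 = N_e + N_{f_1}$, each pair summing to $H_{00}$. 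The nontrivial step is to show that the cross-sector pieces $ST + TS$ and $S'T' + T'S'$ vanish; here is where the exotic statistics become essential. For instance, $f_1^\dag b \cdot f_2^\dag e = f_2^\dag f_1^\dag b e$ since all the transpositions involved are between degree-commuting pairs, while $f_2^\dag e \cdot f_1^\dag b = -f_2^\dag f_1^\dag e b = -f_2^\dag f_1^\dag b e$, the minus sign arising from $ef_1^\dag = -f_1^\dag e$ which is dictated by $\langle(1,1)|(0,1)\rangle = 1$. The two contributions cancel, and the remaining three cross-monomials of $ST + TS$ cancel by the same mechanism; $S'T' + T'S'$ is identical after swapping the indices $1 \leftrightarrow 2$.

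For the mixed commutator $Q_{01}Q_{10} - Q_{10}Q_{01}$, no sector-by-sector cancellation is available and the monomials must be collected directly. Many terms vanish outright from $f_i^2 = (f_i^\dag)^2 = 0$, and the surviving pieces split into a $b$-sector residue proportional to $f_2 f_1^\dag - f_1 f_2^\dag$ and an $e$-sector residue proportional to $f_1 f_2^\dag - f_2 f_1^\dag$; these cancel against one another precisely because $e$ anticommutes with each $f_i^\dag$ while $b$ commutes with them. The only real obstacle is combinatorial book-keeping: every transposition must be assigned its correct sign by consulting the degree inner product. What the calculation then makes transparent is that the closure of the $\Z_2^2$-supertranslation algebra genuinely depends on the mixed rules $\{e,f_i\}=0$ and $[f_1,f_2]=0$ --- with ordinary boson/fermion statistics neither the self-anticommutators nor the mixed commutator would take the claimed form.
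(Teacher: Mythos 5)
Your proposal is correct and follows essentially the same route as the paper: a direct expansion in which the diagonal terms reproduce $H_{00}$ via the CCR/CAR while the cross terms cancel through the interplay of $\{e,f_i^\dag\}=0$ with $[f_1^\dag,f_2^\dag]=0$, and the mixed commutator is handled by explicitly collecting the surviving monomials. The only cosmetic difference is your packaging of each charge into two Nicolai pairs; note that the paper's proof also records, via the Jacobi identity, that $H_{00}$ commutes with both charges, which you may wish to add for completeness.
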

\begin{proof}
This follows from a series of direct computations. First, using the commutation rule for $e$ and $e^\dag$
\begin{align*}
Q_{01}^2  &= f^\dag_1 b b^\dag f_1 + b^\dag f_1 f^\dag_1 b + f^\dag_2 e e^\dag f_2 + e^\dag f_2 f^\dag_2 e\\
& + (f^\dag_1 f^\dag_2 - f^\dag_2 f^\dag_1 )be +  (f_2 f^\dag_1 - f^\dag_1 f_2 ) b e^\dag + (f_1 f^\dag_2 - f^\dag_2 f_1 )b^\dag e +  (f_2 f_1 - f_1 f_2 ) b^\dag e^\dag,\\
& \textnormal{then using the commutation rules for } f_1, f_2 \textnormal{ and their conjugates the second line vanishes} \\
& = f^\dag_1 b b^\dag f_1 + b^\dag f_1 f^\dag_1 b + f^\dag_2 e e^\dag f_2 + e^\dag f_2 f^\dag_2 e,\\
&  \textnormal{next using the non-trivial commutation relations we obtain}\\
&= bb^\dag (\Id - f_1 f^\dag_1) + (\Id + b b^\dag)f_1 f^\dag_1
 + ee^\dag (\Id - f_2 f^\dag_2) + (\Id + e e^\dag)f_2 f^\dag_2\\
 &= b^\dag b + e^\dag e + f_1^\dag f_1  + f_2^\dag f_2 = H_{00}.
 \end{align*}
 The statement that $Q_{10}^2 = H_{00}$ follows from the above proof upon interchanging  $1 \leftrightarrow 2$ for the fermions.\par 
Moving on to the mixed commutator, writing out only the terms that do not obviously commute, we have 
\begin{align*}
[Q_{01},Q_{10}] & = f^\dag_2 bb^\dag f_1 - b^\dag f_1 f^\dag_2 b + f^\dag _2 b e^\dag f_2 - e^\dag f_2 f^\dag_2 b\\
&+ b^\dag f_2 f^\dag_1 b - f^\dag_1 b b^\dag f_2 + b^\dag f_2 f^\dag_2 e - f^\dag_2 e b^\dag f_2\\
&+ f^\dag_1 e b^\dag f_1 - b^\dag f_1 f^\dag_1 e + f^\dag_1 e e^\dag f_2 - e^\dag f_2 f^\dag_1 e\\
&+ e^\dag f_1 f^\dag_1  b - f^\dag_1 b e^\dag f_1 + e^\dag f_1 f^\dag_2 e - f^\dag_2 e e^\dag f_1,\\
& \textnormal{using the (anti)commutation rules we obtain} \\
& = f_1f_2^\dag - b e^\dag - f_2 f^\dag_1 + e b^\dag - eb^\dag + f_2 f^\dag_1 + b e^\dag - f_1 f^\dag_2 =0.
\end{align*}
The fact that $H_{00}$ is central, so $[H_{00}, Q_{01}] = [H_{00}, Q_{10}] =0$, follows from the Jacobi identity and the antisymmetry of the graded commutators. 
\end{proof}
\begin{remark} In general we have a central term $[Q_{10},Q_{01}] = 2 \rmi  \, Z_{11}$ where $Z_{11}$ is of $\Z_2^2$-degree $(1,1)$. Note that we have a commutator here and not an anticommutator as would be the case for standard supersymmetry. The vanishing of $Z_{11}$ is completely expected as all the oscillators are independent of each other.   Adding interactions require  the presence of coupling constants that carry a non-zero $\Z_2^2$-degree (see \cite{Aiwawa:2020b,Bruce:2020a, Bruce:2021}).
\end{remark}
\noindent \textbf{Observations:} \

\begin{enumerate}  
\item The  Hamiltonian \eqref{eqn:Ham1} is the sum of two bosonic and two fermionic harmonic oscillators ($\hbar = \omega = 1$). Thus, in the basis $| n_b, n_e , n_{f_1}, n_{f_2} \rangle $ the energy is given by $E = n_b +  n_e + n_{f_1}+ n_{f_2} = n$. See Proposition \ref{prop:DegSta} for the degeneracy of these states. The first four excited states are given in Table \ref{table:States}.
\item Clearly, $H_{00}|0,0,0,0 \rangle = 0$ and the zero energy ground state is a singlet. We use the shorthand $|0 \rangle := |0,0,0,0 \rangle $ for this ground state. The ground state being a zero energy state implies, as standard,  $Q_{01}|0 \rangle = 0$ and $Q_{10}|0 \rangle = 0$, meaning that $\Z_2^2$-supersymmetry is unbroken.  This is exactly the same situation as the standard supersymmetric oscillator.  The ground state is bosonic. 
\item There is a version of $R$-symmetry which shifts the $\Z_2^2$-degree and is given by 
\begin{align*}
b \mapsto \exp(\rmi \lambda)\,  e, & \quad e  \mapsto \exp(-\rmi \lambda)\, b, 
&\quad f_1 \mapsto \exp(\rmi \lambda)\, f_2, & \quad  f_2 \mapsto \exp(-\rmi \lambda)\, f_1,
\end{align*}
together with the conjugates, and here $\lambda \in \R$. Note that the Hamiltonian $H_{00}$ is invariant and that
 $$Q_{01} \longleftrightarrow Q_{10}\,.$$
\end{enumerate}
\begin{proposition}\label{prop:DegSta}
The $n^{th}$ energy level for $n \geq 1$ of the Hamiltonian \eqref{eqn:Ham1} is $4n$-fold degenerate.
\end{proposition}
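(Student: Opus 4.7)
The plan is to exploit the fact that $H_{00}$ is diagonal in the occupation-number basis. Since the four number operators $N_b, N_e, N_{f_1}, N_{f_2}$ are mutually commuting, self-adjoint, and sum to $H_{00}$, a complete orthonormal eigenbasis of $\mathcal{H}$ is provided by the Fock states $| n_b, n_e, n_{f_1}, n_{f_2} \rangle$, each an eigenstate of $H_{00}$ with eigenvalue $n_b + n_e + n_{f_1} + n_{f_2}$. The CCR for $b$ and $e$ permit $n_b, n_e \in \{0,1,2,\ldots\}$, while the CAR for each $f_i$, together with $f_i^2 = 0$ (itself a consequence of $\{f_i,f_i\}=0$ in the prescribed commutation rules), force $n_{f_1}, n_{f_2} \in \{0,1\}$. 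Counting the $n$-th level thus reduces to enumerating quadruples in $\{0,1,2,\ldots\}^2 \times \{0,1\}^2$ whose entries sum to $n$.

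To carry out the enumeration I would case-split on the fermionic occupation $(n_{f_1}, n_{f_2})$, which ranges over four values. For fixed fermionic numbers, the residual constraint $n_b + n_e = n - n_{f_1} - n_{f_2}$ admits $n - n_{f_1} - n_{f_2} + 1$ non-negative integer solutions when the right-hand side is non-negative, and none otherwise. For $n \geq 2$ all four fermionic sectors contribute, giving the total $(n+1) + n + n + (n-1) = 4n$. For $n = 1$ the sector $(n_{f_1}, n_{f_2}) = (1,1)$ drops out but the remaining three still contribute $2 + 1 + 1 = 4$, in agreement with the formula $4n$. As a compact sanity check, packaging the count into a generating function yields
\begin{equation*}
\sum_{n \geq 0} d_n\, x^n \;=\; \frac{(1+x)^2}{(1-x)^2} \;=\; 1 + \sum_{n \geq 1} 4n\, x^n,
\end{equation*}
where $d_n$ denotes the dimension of the $n$-th eigenspace, and the stated degeneracy is read off directly.

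The argument is essentially combinatorial once the $\Z_2^2$-graded Fock structure is in hand, so no serious obstacle is anticipated; the only point deserving mild care is the boundary case $n=1$, where the doubly-occupied fermionic sector must be excluded rather than counted as contributing $-1$ bosonic solutions.
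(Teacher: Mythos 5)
Your proof is correct and follows essentially the same route as the paper: fix the fermionic occupation numbers $(n_{f_1},n_{f_2})\in\{0,1\}^2$ and count the $m+1$ non-negative integer solutions of $n_b+n_e=m$ in each sector, arriving at $(n+1)+2n+(n-1)=4n$. Your explicit treatment of the boundary case $n=1$ and the generating-function check are welcome extras, but they do not change the argument.
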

\begin{proof}
For any fixed $n \geq 1$ and state is labelled $|n_b, n_e, n_{f_1}, n_{f_2} \rangle$ where $n_b, n_e \in \mathbb{N}$ and $n_{f_1}, n_{f_2} \in \{0,1 \}$, subject to the constraint  $n_b +  n_e + n_{f_1}+ n_{f_2} = n$. The fermionic labels are $00$, $01$, $10$ and $11$ and thus the proof reduces to arranging pairs of integers that sum to $n$, $n-1$ (counted twice) and $n-2$. This problem is equivalent to the number of ways one can place a single stick between $n$ balls, and then $n-1$ balls (counted twice) and finally $n-2$ balls. For the $n$-case we have 
\begin{align*}
| OO\cdots OO & = (0,n)\\
 O| O\cdots OO & = (1,n-1)\\
 & \vdots \\
  OO\cdots O|O & = (n-1,1)\\
   OO\cdots OO| & = (n,0)
\end{align*}
Thus, the number of pairs of numbers that sum to $n$ is $n+1$.  Then it is clear that the number of states for a given $n$ is just $n+1 + 2n + n-1 = 4n$.
\end{proof}
\renewcommand{\arraystretch}{1.5}
\begin{table}[h]
  \begin{tabular}{ | l | l | l | l |l|}
 \hline
  \multicolumn{1}{|c|}{Energy } &  \multicolumn{4}{|c|}{States }\\
    \hline
     & Boson & Exotic  &  Fermion 1 & Fermion 2 \\
       & $(0,0)$ & $(1,1)$   &  $(0,1)$  & $(1,0)$  \\ \hline
   0 & $|0,0,0,0 \rangle$ & - & - &-\\ \hline
    1 & $|1,0,0,0 \rangle$& $|0,1,0,0 \rangle$ &  $|0,0,1,0 \rangle$& $|0,0,0,1 \rangle$ \\
    \hline
    2 & $|2,0,0,0 \rangle$& $|1,1,0,0 \rangle$ &  $|1,0,1,0 \rangle$& $|1,0,0,1 \rangle$   \\
       & $|0,2,0,0 \rangle$& $|0,0,1,1 \rangle$ &  $|0,1,0,1 \rangle$& $|0,1,1,0 \rangle$ \\
    \hline
      3 & $|3,0,0,0 \rangle$& $|2,1,0,0 \rangle$ &  $|2,0,1,0 \rangle$& $|2,0,0,1 \rangle$   \\
       & $|1,2,0,0 \rangle$& $|1,0,1,1 \rangle$ &  $|1,1,0,1 \rangle$& $|1,1,1,0 \rangle$ \\
         & $|0,1,1,1 \rangle$& $|0,3,0,0 \rangle$ &  $|0,2,1,0 \rangle$& $|0,2,0,1 \rangle$ \\
         \hline
      4 & $|4,0,0,0 \rangle$& $|3,1,0,0 \rangle$ &  $|3,0,1,0 \rangle$& $|3,0,0,1 \rangle$   \\
       & $|2,2,0,0 \rangle$& $|2,0,1,1 \rangle$ &  $|2,1,0,1 \rangle$& $|2,1,1,0 \rangle$ \\
         & $|1,1,1,1 \rangle$& $|1,3,0,0 \rangle$ &  $|1,2,1,0 \rangle$& $|1,2,0,1 \rangle$ \\
         & $|0,4,0,0 \rangle$& $|0,2,1,1 \rangle$ &  $|0,3,0,1 \rangle$& $|0,3,1,0 \rangle$ \\
         \hline
   \end{tabular}
   \medskip 
   
   \caption{The first few energy levels of the $\Z_2^2$-oscillator \eqref{eqn:Ham1} in the `particle number' basis. }\label{table:States}
  \end{table}
We have a pair of Witten parity operators\footnote{also known as Klein operators, chirality operators or fermion number operators, though this last name is not appropriate in the current situation.} defined as
\begin{align}\label{eq:WitOPs}
& K_1 = \cos\big( \pi (N_e  + N_{f_1}) \big) , && K_2 = \cos\big( \pi (N_e  + N_{f_2}) \big),
\end{align}
which are both clearly $\Z_2^2$-degree $(0,0)$ and self-adjoint,  thus they correspond to observables. By construction we have that
\begin{equation}
K_i | n_b, n_e , n_{f_1}, n_{f_2} \rangle = (-1)^{n_e + n_{f_i}}| n_b, n_e , n_{f_1}, n_{f_2} \rangle.
\end{equation}
It is immediately clear that 
\begin{align}\label{eq:ComRulesW}
&[K_1, K_2] = 0, & [K_1, H_{00}] = [K_2, H_{00}] =0, && K_1^2 = K_2^2 = \Id.
\end{align}
In particular, the above implies that we can have simultaneous eigenfunctions of the Hamiltonian and the two Witten parity operators \eqref{eq:WitOPs}. We can then pick a basis for the states $| n, \varepsilon_1, \varepsilon_2 \rangle$, where $\varepsilon_i \in \{+1, -1\}$.  That is, the space of states $\mathcal{H}$ has a decomposition into four sectors depending on the sign of the Witten parity operators, i.e., $\mathcal{H} = \mathcal{H}_{++}\oplus\mathcal{H}_{--} \oplus\mathcal{H}_{+-} \oplus \mathcal{H}_{-+} $. These sectors correspond to bosons, exotic bosons, fermions of type 1 and fermions of type 2.  It is then convenient to relabel these sectors  via the corresponding $\Z_2^2$-degree, i.e.,   $\mathcal{H} = \mathcal{H}_{00}\oplus\mathcal{H}_{11} \oplus\mathcal{H}_{01} \oplus \mathcal{H}_{10}$.   Moreover, the Witten parity operators imply the superselection rule that only states that are homogeneous in $\Z_2^2$-degree are physically realisable (see, for example \cite{Giulini:2009}).  It is straightforward to observe that 
\begin{align}
& Q_{01}K_1 = - K_1 Q_{01}&&  Q_{10}K_1 = +K_1 Q_{10},\\
& Q_{01}K_2 = + K_1 Q_{01}&&  Q_{10}K_2 = - K_2 Q_{10},\nonumber 
\end{align}
and thus
\begin{align*}
& Q_{01}\mathcal{H}_{00} \subset \mathcal{H}_{01}, && Q_{10}\mathcal{H}_{00} \subset \mathcal{H}_{10}, \\
& Q_{01}\mathcal{H}_{11} \subset \mathcal{H}_{10}, && Q_{10}\mathcal{H}_{11} \subset \mathcal{H}_{01}, \\
& Q_{01}\mathcal{H}_{01} \subset \mathcal{H}_{00}, && Q_{10}\mathcal{H}_{01} \subset \mathcal{H}_{10}, \\
& Q_{01}\mathcal{H}_{10} \subset \mathcal{H}_{11}, && Q_{10}\mathcal{H}_{01} \subset \mathcal{H}_{11}. 
\end{align*}
The system really is $\Z_2^2$-supersymmetric, i.e., states from one sector are mapped to other sectors using $Q_{01}$ and $Q_{10}$.  It is important to note that applying $Q_{10}Q_{01}$ (or equivalent in this case $Q_{01} Q_{10}$) does \emph{not} return one to the starting sector as it would in standard supersymmetry.  For example, $Q_{10}Q_{01}\mathcal{H}_{00} \subset \mathcal{H}_{11}$.

\subsection{Klein operators and ``superisation''}
We now proceed to apply Klein's construction (see \cite{Klien:1938}) to redefine the operators we work with to render the system super, i.e.,  with the standard commutation/anticommutation rules for the creation and annihilation operators defined by a $\Z_2$-grading.  The natural choice here is to use the total degree of the assigned $\Z_2^2$-degree.  Moreover, we want the construction to lead to two standard supersymmetries.\par 
\begin{remark}
 Quesne  showed that the algebra of $\Z_2^2$-graded supersymmetric quantum mechanics is realisable in terms of a single bosonic degree of freedom using Calogero--Vasiliev algebras i.e., there is a minimal bosonisation of the theory (see \cite{Quesne:2021}). We will content ourselves with a `superisation' in this note. 
\end{remark}
We have to chose one of the Witten operators \eqref{eq:WitOPs} to be our Klein operator. We pick $K_1$ for no particular reason other than our choice of ordering with the elements of $\Z_2^2$. We then define a new set of fermionic creation and annihilation operators as 
\begin{subequations}
\begin{align}
& a_1 = f_1 K_1, && a^\dag_1 = K_1 f^\dag_1,\\
& a_2 = f_2 K_1, && a^\dag_2 = K_1 f^\dag_2.
\end{align}
\end{subequations}
\begin{proposition}
The set of operators $\{ b, b^\dag, e , e^\dag, a_1, a^\dag_1, a_2, a^\dag_2 \}$ satisfy the standard commutation/anticommutation rules  for a pair of bosonic and a pair of fermionic creation and annihilation operators where the supercommutation rules are defined by the total degree of the operators. 
\end{proposition}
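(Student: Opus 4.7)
The plan is to reduce every supercommutator in the new set $\{b, b^\dag, e, e^\dag, a_1, a_1^\dag, a_2, a_2^\dag\}$ to a $\Z_2^2$-graded (anti)commutator in the original set, exploiting two simple properties of the Klein operator: $K_1^2 = \Id$, and the way $K_1$ (anti)commutes with each original generator, which is governed by how that generator shifts $N_e + N_{f_1}$. Concretely, the workhorse lemma is that $K_1$ commutes with each of $b, b^\dag, f_2, f_2^\dag$ and anticommutes with each of $e, e^\dag, f_1, f_1^\dag$. This is immediate from $K_1 |n_b, n_e, n_{f_1}, n_{f_2}\rangle = (-1)^{n_e + n_{f_1}}|n_b, n_e, n_{f_1}, n_{f_2}\rangle$ together with the shift properties of the creation/annihilation operators.

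With this in hand, I would verify the required (anti)commutators sector by sector according to the total degree. The bosonic block $\{b, b^\dag, e, e^\dag\}$ is literally unchanged by the construction and already satisfies the standard CCR and mutual commutativity, because $\langle (0,0) \mid (1,1)\rangle = 0$ turns the $\Z_2^2$-commutator of a $b$-type with an $e$-type operator into a genuine commutator. For the cross relations between a boson and an $a_i = f_i K_1$: in the $b$-case, $b$ and $f_i$ commute and $K_1$ commutes with $b$, so $[b, a_i] = [b, f_i] K_1 = 0$. The slightly more delicate case is $[e, a_i]$, which is the conceptual crux: using that $e$ anticommutes with $f_i$ in the $\Z_2^2$-picture and that $K_1$ also anticommutes with $e$, a direct computation gives
\begin{equation*}
e a_i = e f_i K_1 = -f_i e K_1 = f_i K_1 e = a_i e,
\end{equation*}
so what was a $\Z_2^2$-anticommutator of $e$ with $f_i$ becomes a genuine commutator of $e$ with $a_i$ in the superised picture, exactly as the total-degree prescription demands.

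Finally, for the fermion/fermion sector I would compute each of $\{a_i, a_j\}$ and $\{a_i, a_j^\dag\}$ by pushing one $K_1$ past the adjacent $f$-type factor via the commutation table, collapsing $K_1^2 = \Id$, and then invoking the original $\Z_2^2$-CAR. For example, $a_1 a_2 = f_1 f_2$ (no sign, since $K_1$ commutes with $f_2$) and $a_2 a_1 = -f_2 f_1 = -f_1 f_2$ (one sign from $K_1 f_1 = -f_1 K_1$, combined with $f_1 f_2 = f_2 f_1$), giving $\{a_1, a_2\} = 0$; while $\{a_1, a_1^\dag\} = f_1 f_1^\dag + f_1^\dag f_1 = 1$. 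The remaining identities are strictly analogous. The main obstacle is purely sign bookkeeping: one must consistently combine the $K_1$-table, the original $\Z_2^2$-graded (anti)commutators (which are commutators or anticommutators depending on $\langle \cdot \mid \cdot\rangle$), and the involutivity $K_1^2 = \Id$. No deeper obstruction appears, and the argument reduces to a finite check over the handful of independent supercommutator types.
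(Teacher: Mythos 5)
Your proof is correct and follows essentially the same route as the paper: both arguments rest on $K_1^2=\Id$ together with the observation that $K_1$ commutes with $b,b^\dag,f_2,f_2^\dag$ and anticommutes with $e,e^\dag,f_1,f_1^\dag$, and both then reduce each supercommutator of the new generators to a $\Z_2^2$-graded (anti)commutator of the old ones, checking representative cases such as $[e,a_i]=0$, $\{a_1,a_2\}=0$ and $\{a_1,a_1^\dag\}=1$. Your version is, if anything, slightly more explicit in isolating the $K_1$-commutation table as a stated lemma, but the substance is identical.
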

\begin{proof}
This follows from the properties of the Witten parity operators. We do not need to check all expressions, just those that involve $a_1$ and $a_2$ (and their conjugates). Moreover, we need not consider any expression involving $b $ (and its conjugate).  For example, $\{ e, f_1\} K_1= ef_1K_1 + f_1e K_1 =ef_1  K_1 - f_1 K_1 e = [e, a_1] =0$. Similarly, $K_1[f_2, f_1]K_1  = K_1 f_2 f_1 K_1 - K_1 f_1 f_2 K_1 = f_2 f_1 K_1 +  f_1 K_1f_2 K_1 = \{a_2,a_1 \} =0$. Finally, just to further illustrate the point, $K_1\{ f_1, f^\dag_1\}K_1 = K_1 f_1 f^\dag_1 K_1 + K_1 f^\dag_1 f_1 K_1 =  f_1K_1 K_1 f^\dag_1 + K_1 f^\dag_1 f_1 K_1  =  \{a_1, a^\dag_1 \} =1$.  All other commutators and anticommutators can similarly be deduced.  The claim that these are now all supercommutators using the total degree follows directly.  
\end{proof}
Furthermore we define the following self-adjoint operators:
\begin{subequations}
\begin{align}\label{eqn:Ham}
& H  := b^\dag b + e^\dag e + a^\dag_1 a_1 + a^\dag_2 a_
2 =  H_{00},\\ \label{eqn:Q1}
& Q_{1} := \rmi K_1 Q_{01} = \rmi a^\dag_1 b - \rmi b^\dag a_1 + \rmi a^\dag_2 e - \rmi e^\dag a_2,\\\label{eqn:Q2}
& Q_{2} := K_1 Q_{10} = a^\dag_2 b + b^\dag a_2 + a^\dag_1 e  + e^\dag a_1.
\end{align}
\end{subequations}
Note that the Hamiltonian is unchanged, but now has the interpretation of the sum of the Hamiltonians for a pair of distinguishable bosons and a pair of distinguishable fermions. 
\begin{theorem}
The above operators \eqref{eqn:Ham}, \eqref{eqn:Q1} and  \eqref{eqn:Q2} satisfy the $\mathcal{N}=2$ super-translation algebra (with vanishing central charge)
\begin{align*}
& \{ Q_1 , Q_1 \} = \{Q_2, Q_2\} = 2 H, && \{ Q_2, Q_1\} =0,
\end{align*}
and all other commutators vanishing.
\end{theorem}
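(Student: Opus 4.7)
The plan is to reduce each identity to the $\Z_2^2$-graded relations already established in Theorem \ref{thm:Z22QAlg}, using only three facts about the Klein operator $K_1$: (i) $K_1^2 = \Id$; (ii) $K_1 Q_{01} = -Q_{01} K_1$ and $K_1 Q_{10} = +Q_{10} K_1$; and (iii) $[K_1, H_{00}] = 0$. Facts (ii) are precisely the relations displayed just above the theorem, and (iii) is part of \eqref{eq:ComRulesW}. Since $Q_1$ and $Q_2$ both carry total $\Z_2$-degree $1$, the relevant super-brackets are indeed anticommutators, matching the statement to be proved.

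For $\{Q_1, Q_1\} = 2 Q_1^2$, I would push one $K_1$ past $Q_{01}$, producing a sign that cancels $\rmi^2$, collapse $K_1^2 = \Id$, and invoke $Q_{01}^2 = H_{00}$ from Theorem \ref{thm:Z22QAlg} to land on $H$. The computation of $\{Q_2, Q_2\}$ is similar but cleaner: $K_1$ commutes with $Q_{10}$, so no sign change is needed before applying $K_1^2 = \Id$ and $Q_{10}^2 = H_{00}$.

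The mixed anticommutator is the only step requiring a touch of care. Bringing the two $K_1$'s together in each of $Q_1 Q_2$ and $Q_2 Q_1$, the opposite sign rules in (ii) produce
$$\{Q_1, Q_2\} \;=\; \rmi\,(Q_{10} Q_{01} - Q_{01} Q_{10}) \;=\; \rmi\, [Q_{10}, Q_{01}].$$
Because the $\Z_2^2$-degrees $(1,0)$ and $(0,1)$ have vanishing scalar product, the $\Z_2^2$-commutator $[Q_{10}, Q_{01}]_{\Z_2^2}$ of Theorem \ref{thm:Z22QAlg} is just the ordinary commutator, so the whole expression vanishes.

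Finally, $[H, Q_1] = [H, Q_2] = 0$ follows immediately from $H = H_{00}$: by \eqref{eq:ComRulesW} the Hamiltonian commutes with $K_1$, and by (the Jacobi-identity consequence in) Theorem \ref{thm:Z22QAlg} it commutes with both $Q_{01}$ and $Q_{10}$. I expect no serious obstacle anywhere; the substance is already packaged in Theorem \ref{thm:Z22QAlg}, and the Klein operator $K_1$ merely relabels the $\Z_2^2$-graded bracket relations as standard $\Z_2$-graded super-brackets.
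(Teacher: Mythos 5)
Your proposal is correct and follows essentially the same route as the paper: both reduce each super-bracket to the $\Z_2^2$-graded relations of Theorem \ref{thm:Z22QAlg} by moving one $K_1$ past $Q_{01}$ or $Q_{10}$ (picking up the sign that cancels $\rmi^2$ in the $Q_1$ case), collapsing $K_1^2 = \Id$, and observing that the mixed anticommutator becomes $\rmi\,[Q_{10},Q_{01}] = 0$, with centrality of $H$ following from $H = H_{00}$ and $[K_1,H_{00}]=0$. No gaps.
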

\begin{proof}
Note that we are now dealing with commutators/anticommutators defined by the total degree of the operators. Direct computation using the properties of the Witten operators and Theorem \ref{thm:Z22QAlg}.
\begin{align*}
& \{ Q_1, Q_1\} = 2 \rmi K_1 Q_{01}\rmi K_1 Q_{01} = 2 Q_{01}Q_{01} = 2 H_{00} = 2 H,\\
& \{ Q_2, Q_2\} = 2 K_1 Q_{10} K_1 Q_{10} = 2 Q_{10}Q_{10} = 2 H_{00} = 2 H, \\
& \{ Q_2, Q_1 \} = \rmi ( K_1 Q_{10}K_1Q_{01} + K_1 Q_{01}K_1 Q_{10}) = \rmi (Q_{10} Q_{01} - Q_{01}Q_{10}) =0. 
\end{align*}
Checking that the Hamiltonian is central is similarly straightforward,
\begin{align*}
[H, Q_1] = \rmi K_1 [H_{00}, Q_{01}] =0 , && [H, Q_2] =  K_1 [H_{00}, Q_{10}] =0.
\end{align*}
\end{proof}
\begin{remark}
This construction is not completely canonical, there is the other obvious choice of using $K_2$ and the obvious amendments to the above constructions. This would be no more than exchanging the labelling of fermions of type 1 and type 2. 
\end{remark}
The Lie superalgebra formed by $H, Q_1$ and $Q_2$ should be considered as a $\Z_2^2$-graded Lie superalgebra, i.e., a Lie superalgebra with an additional compatible $\Z_2^2$-grading. Specifically, $H$ is even and carries $\Z_2^2$-degree $(0,0)$,  $Q_1$ is odd and carried $\Z_2^2$-degree $(0,1)$, and $Q_2$ is odd and carried $\Z_2^2$-degree $(1,0)$. Of course, these operators still act on the Hilbert-Fock space  $\mathcal{H} = \mathcal{H}_{00}\oplus\mathcal{H}_{11} \oplus\mathcal{H}_{01} \oplus \mathcal{H}_{10}$, and the $\Z_2^2$-grading still needs to be taken into account. The Witten  parity operators encode the $\Z_2^2$-grading and one can easily deduce the following: 
\begin{align}\label{eqn:KQComms}
& \{K_1,  Q_{1}\} = 0, &&  [K_1,  Q_{2}] =0,\\
&  [K_2,  Q_{1}] =0,&&  \{K_2,  Q_{2}\}=0.\nonumber 
\end{align}
Just as before, we have
 \begin{align*}
& Q_{1}\mathcal{H}_{00} \subset \mathcal{H}_{01}, && Q_{2}\mathcal{H}_{00} \subset \mathcal{H}_{10}, \\
& Q_{1}\mathcal{H}_{11} \subset \mathcal{H}_{10}, && Q_{2}\mathcal{H}_{11} \subset \mathcal{H}_{01}, \\
& Q_{1}\mathcal{H}_{01} \subset \mathcal{H}_{00}, && Q_{2}\mathcal{H}_{01} \subset \mathcal{H}_{10}, \\
& Q_{1}\mathcal{H}_{10} \subset \mathcal{H}_{11}, && Q_{2}\mathcal{H}_{01} \subset \mathcal{H}_{11}. 
\end{align*}
Via this construction, the relative statistics of the creation/annihilation operators are now standard.  Moreover, the system exhibits  supersymmetry, but now an extra internal  quantum number - the $\Z_2^2$-grading that is encoded in the two Witten parity operators.  These observations sit comfortably with the results of \cite{Druhl:1970}. In particular, systems with para-fermions can, under some technical conditions, be reformulated to have standard statistics, but now the observable algebra is selected by a non-Abelian gauge group.  This supersymmetric system is not entirely standard. Supersymmetry generators usually anticommute with the Witten parity operator, but in the current situation we have both commutators and anticommutators, see \eqref{eqn:KQComms}. Generalising supersymmetry to include internal degrees of freedom - and we view the extra $\Z_2^2$-grading in this light -  has a long history dating back to the late 1970s (see \cite{Fayet:1977} and references therein).

\section{Concluding remarks}
In this short note, we have constructed a simple $\Z_2^2$-supersymmetric model based on creation and annihilation operators. We have shown that via Klein's construction can be rendered supersymmetric. \par 
One issue here is that the model does not have a central charge and it is desirable to amend this. The lack of central charge is due to the four oscillators not interacting with each other.  It has already be noticed that interacting models, classical at least, seem to require coupling constants that carry non-zero $\Z_2^2$-grading. The physical interpretation of such constants is not exactly clear, nor is the role of exotic bosons in nature.\par  
 Moreover, building simple models with $\Z_2^n$-supersymmetry for $n>2$ is a challenge as the number of elements of $\Z_2^n$ grows exponentially as $n$ increases.  This increase in the number of degrees of freedom has hindered model-building (for work in this direction see \cite{Aiwawa:2020a}).  
 
 \section*{Acknowledgements}
 The author thanks Steven Duplij  and Francesco Toppan for their friendship and discussions. Cordial thanks are extended to the referee for their careful reading of the manuscript and helpful comments.

\end{document}